\documentclass[11pt]{article}

\usepackage[margin=1in]{geometry}

\usepackage{amsmath,amssymb,amsfonts}
\usepackage{amsthm}

\usepackage{longtable}
\usepackage{booktabs}

\usepackage{url}
\usepackage[colorlinks=true,citecolor=blue,linkcolor=blue,urlcolor=blue]{hyperref}

\usepackage{authblk}   

\theoremstyle{plain}
\newtheorem{theorem}{Theorem}[section]
\newtheorem{lemma}[theorem]{Lemma}
\newtheorem{fact}[theorem]{Fact}
\newtheorem{corollary}[theorem]{Corollary}

\theoremstyle{definition}

\newtheorem{example}[theorem]{Example}

\theoremstyle{remark}
\newtheorem{remark}[theorem]{Remark}

\title{Resolution of an Open Problem on Quasi-Cyclic Codes over $\mathbb{Z}_4$ and New Quaternary Linear Codes}

\author[1]{Nuh Aydin}
\author[2]{Aditya Tyagi}
\affil[1]{Kenyon College, Gambier, OH 43022, USA}
\affil[2]{IIT Gandhinagar, Gandhinagar, GJ 382055, India}

\date{}

\begin{document}
\maketitle

\begin{abstract}
Given a cyclic code $C_g = \langle g(x) \rangle$ of odd length $m$ over
$\mathbb{Z}_4$, one common way to build a quasi-cyclic (QC) code is to pick
$f_1, \dots, f_\ell \in \mathbb{Z}_4[x]$ and let
$C = \langle (f_1 g, \dots, f_\ell g) \rangle$.
Because $\mathbb{Z}_4$ is not a field, the type of $C_g$ ($4^{k_1}2^{k_2}$)  is not necessarily inherited by $C$. Determining conditions under which the type of $C_g$ is inherited by $C$ was posed as an open problem recently in \cite{AydinLuOnta2023}. In this paper, we settle this
problem. We first give two sufficient conditions for the type to be
preserved: one requires a single $f_i$ to be coprime to $x^m-1$ over
$\mathbb{F}_2$, the other only requires the $f_i$ to be jointly
coprime to it. Neither condition is necessary in general. Using the fact that
$x^m-1$ is squarefree for odd $m$, we decompose
$\mathbb{Z}_4[x]/\langle x^m-1\rangle$ into a product of finite chain rings using Chinese remainder theorem (CRT)
and derive a condition on $f_1, \dots, f_\ell$ that is both necessary
and sufficient for $C$ to match the type of $C_g$. This condition depends only on
the irreducible factors of $x^m-1$ where $g$ does not already vanish.
This also yields a simple test for when $C$ is a free
$\mathbb{Z}_4$-module. Finally, we report many new QC codes over $\mathbb{Z}_4$, found by  computer searches using Magma software~\cite{Magma1997} guided by this criterion, with Lee distances greater than previously known codes of the same type.
\end{abstract}

\noindent\textbf{Keywords:} Quasi-cyclic codes, cyclic codes, codes over rings, $\mathbb{Z}_4$-linear codes, finite chain rings

\section{Introduction and Motivation}

As a generalisation of cyclic codes, quasi-cyclic (QC) codes are a
prominent class of linear codes (see
\cite{HuffmanPless2003,Wan1997,Dougherty2017}). They have been studied extensively since the foundational
work of \cite{ChenPeterson1969,Weldon1970,Kasami1974}, and a large number of new (record breaking) codes have been obtained from the class of QC codes over the last several decades. For a small sample of many publications that reported new linear codes obtained from QC codes see \cite{LingSole2003,chen2022,AkreAydinHarrington2023}. In recent years, QC codes on finite rings have also received much attention (see \cite{HammonsEtAl1994,BhaintwalWasan2009,Cao2018,Gao2014}). One of the earlier works on QC codes over finite rings was \cite{AydinRayChaudhuri2002} in which QC codes over $\mathbb{Z}_4$ were used to
construct good binary codes via the Gray map, including new nonlinear binary codes
(see~\cite{AydinRayChaudhuri2002}).

A commonly studied type of QC codes over fields or rings is obtained as follows. In the case of codes over $\mathbb{Z}_4$, we start with a cyclic code
\[
C_g=\langle g(x) \rangle
\]
of odd length $m$ over $\mathbb{Z}_4$. Then choose polynomials
\[
f_1(x),\ldots,f_\ell(x)\in\mathbb{Z}_4[x],
\]
and consider the $1$-generator QC code
\[
C=
\left\langle
\big(f_1(x)g(x),\ldots,f_\ell(x)g(x)\big)
\right\rangle.
\]
Here the resulting QC code has length $m\ell$ and index $\ell$.

Since $\mathbb{Z}_4$ is not a field, a cyclic code (or more generally a linear code) over $\mathbb{Z}_4$
need not be free as a $\mathbb{Z}_4$-module. Nevertheless, by the
structure theorem for finitely generated modules over $\mathbb{Z}_4$ (see~\cite{McDonald1974}),
its type is well defined. Thus, if $C_g$ has type
\[
4^{k_1}2^{k_2},
\]
then
\[
C_g\cong \mathbb{Z}_4^{k_1}\oplus\mathbb{Z}_2^{k_2},
\qquad
|C_g|=4^{k_1}2^{k_2}.
\]

When the above construction is used to obtain a QC code from a cyclic
code, the type need not be preserved. This leads to the following
question, which was stated as an open problem in
\cite{AydinLuOnta2023}:
\[
\textit{Under what conditions does $C$ have the same type as $C_g$?}
\]

In this work, we answer this question. We first give two sufficient
conditions. Theorem 3.1 assumes that at least one of the polynomials
$f_1,\ldots,f_\ell$ is relatively prime to $x^m-1$ modulo $2$.
Theorem 3.2 weakens this assumption by requiring only that the
collection $f_1,\ldots,f_\ell$ be jointly relatively prime to it over
$\mathbb{F}_2$. Neither condition is necessary in general.

For an odd positive integer $m$, a more precise criterion can be
obtained from the factorization of $x^m-1$ over $\mathbb{F}_2$. Since
$m$ is odd, $x^m-1$ is squarefree over $\mathbb{F}_2$. The corresponding
factorization over $\mathbb{Z}_4$ yields a Chinese remainder
decomposition of
\[
R=\mathbb{Z}_4[x]/\langle x^m-1\rangle
\]
into finite chain rings. We use this decomposition to examine the
construction componentwise. Theorem 4.1 gives a necessary and
sufficient condition for the QC code $C$ to have the same type as the
cyclic code $C_g$. The criterion depends only on those irreducible
factors for which the image of $g$ in the corresponding component is
nonzero.

The same decomposition also gives a criterion for freeness. In
particular, Corollary 4.2 characterizes when the resulting QC code is a
free $\mathbb{Z}_4$-module.

Examples for $m=3$ are given in Section 5 to illustrate the differences
between the sufficient conditions of Section 3 and the necessary and
sufficient condition of Section 4. In Section 6, we present a number of QC codes obtained by a computational search guided by Theorem 4.1. Their
parameters and minimum Lee distances are compared with currently best known codes listed in the database (see \url{http://quantumcodes.info/Z4}). We have found 36 codes with better minimum Lee distances than the comparable best known codes. Finally, Section 7 discusses some potential
directions for further work.

\section{Preliminaries}
\label{sec:prelim}

Throughout this paper, $m$ is an odd positive integer and
$C_g = \langle g(x)\rangle$ is a cyclic code of length $m$ over
$\mathbb{Z}_4$, that is, an ideal of
\[
R := \mathbb{Z}_4[x]/\langle x^m-1\rangle
\]
generated by a single polynomial $g(x)$. As a finitely generated
$\mathbb{Z}_4$-module, $C_g$ decomposes as
\[
C_g \cong \mathbb{Z}_4^{k_1} \oplus \mathbb{Z}_2^{k_2}
\]
for unique $k_1,k_2\geq0$, so it has a well-defined type
$4^{k_1}2^{k_2}$ and cardinality
\[
|C_g|=4^{k_1}2^{k_2}.
\]

\noindent Our main objects of study in this work are $1$-generator QC codes of index $\ell$ of the form
\[
C =
\big\langle
\big(f_1(x)g(x),\dots,f_\ell(x)g(x)\big)
\big\rangle
\]
where $f_i(x)\in\mathbb{Z}_4[x]$. We write $\bar h(x)$, or simply
$h(x)$ when the context is clear, for the reduction of
$h(x)\in\mathbb{Z}_4[x]$ modulo $2$, regarded as an element of
$\mathbb{F}_2[x]$. Similarly, we write $\bar f(x)$ and $\bar f$ interchangeably for a polynomial when the variable is clear from context.

\begin{fact}[Squarefreeness]
\label{fact2.1}
For an odd positive integer $m$, $x^m-1$ is squarefree in
$\mathbb{F}_2[x]$.
\end{fact}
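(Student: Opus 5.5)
The plan is to use the standard derivative criterion: a polynomial $h(x)\in\mathbb{F}_2[x]$ is squarefree if $\gcd(h,h')=1$. If $h=p^2q$ with $p$ nonconstant, then $h'=2pp'q+p^2q'=p^2q'$ in characteristic $2$. More generally, $h'=p(2p'q+pq')$ in any characteristic, so $p$ divides both $h$ and $h'$. Hence a repeated factor forces a nontrivial common divisor of $h$ and $h'$. Equivalently, $\gcd(h,h')=1$ rules out repeated irreducible factors.

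We apply this to $h(x)=x^m-1$, which equals $x^m+1$ over $\mathbb{F}_2$. Its formal derivative is $h'(x)=mx^{m-1}$. Since $m$ is odd, $m\equiv 1 \pmod 2$, so $h'(x)=x^{m-1}$ in $\mathbb{F}_2[x]$. The only irreducible divisor of $x^{m-1}$ is $x$. But $x\nmid x^m+1$, because the constant term of $x^m+1$ is $1\neq 0$ (that is, $h(0)=1$). Therefore $\gcd(h,h')=1$, and $x^m-1$ is squarefree.

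An alternative route gives the same conclusion: repeated roots in a splitting field would be common roots of $h$ and $h'$, and every root $\alpha$ of $h$ satisfies $\alpha^m=1$, so $\alpha\neq 0$, while $h'$ vanishes only at $0$.

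There is no real obstacle. The only point needing care is the role of odd $m$: it is exactly what makes $h'$ nonzero. For even $m$ the derivative vanishes and indeed $x^m-1=(x^{m/2}-1)^2$, so the hypothesis is essential.
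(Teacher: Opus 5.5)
Your proposal is correct and follows the same argument as the paper. In $\mathbb{F}_2[x]$ the derivative is $x^{m-1}$, it is coprime to $x^m-1$ because the constant term is nonzero, and so the derivative criterion gives squarefreeness. You also justify the needed direction of that criterion ($p^2\mid h \Rightarrow p\mid h'$), which the paper simply quotes.
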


\begin{proof}
The formal derivative of $x^m-1$ is $mx^{m-1}$. Since $m$ is odd,
$m\equiv1\pmod2$, so in $\mathbb{F}_2[x]$ this derivative equals
$x^{m-1}$. As $x^m-1$ has nonzero constant term
$(-1=1\ne0)$, we have
\[
\gcd(x^m-1,x^{m-1})=1
\]
in $\mathbb{F}_2[x]$. A polynomial over a field is squarefree if and
only if it is coprime to its formal derivative, so $x^m-1$ is
squarefree over $\mathbb{F}_2$.
\end{proof}

\noindent This elementary fact of squarefreeness is a critical part of the argument below. By
Fact 2.1, we may factor
\[
x^m-1=\bar p_1(x)\cdots\bar p_r(x)
\quad\text{in }\mathbb{F}_2[x],
\]
where $\bar p_i$'s are distinct and irreducible. By Hensel's lemma,
since $x^m-1$ is separable over $\mathbb{F}_2$, each $\bar p_i$ lifts
uniquely to a monic factor
$p_i(x)\in\mathbb{Z}_4[x]$ of $x^m-1$ reducing to $\bar p_i$ modulo
$2$, and
\[
x^m-1=p_1(x)\cdots p_r(x)
\quad\text{in }\mathbb{Z}_4[x],
\]
with the $p_i$'s pairwise coprime in $\mathbb{Z}_4[x]$. By the Chinese
Remainder Theorem, this gives a ring isomorphism
\[
R:=\mathbb{Z}_4[x]/\langle x^m-1\rangle
\cong
\prod_{i=1}^r R_i,
\qquad
R_i:=\mathbb{Z}_4[x]/\langle p_i(x)\rangle.
\]

\begin{lemma}
\label{lem:chain}
With the notation above, each $R_i$ is a finite chain ring (see~\cite{NortonSalagean2000}), with the unique maximal ideal
$\mathfrak m_i=2R_i$ and the ideal lattice
\[
0\subsetneq2R_i\subsetneq R_i.
\]
Consequently every $u\in R_i$ can be written as
\[
u=2^v w
\]
with $w\in R_i^\times$ and $v\in\{0,1,2\}$, where $v=2$ corresponds
to $u=0$.
\end{lemma}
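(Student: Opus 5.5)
To prove Lemma~\ref{lem:chain}, I will work directly with the reduction map $\pi: R_i \to R_i/2R_i$. Since $2\in\mathbb{Z}_4$ and $p_i$ is monic, we have
\[
R_i/2R_i \cong \mathbb{Z}_4[x]/\langle 2, p_i(x)\rangle \cong \mathbb{F}_2[x]/\langle \bar p_i(x)\rangle .
\]
Because $\bar p_i$ is irreducible over $\mathbb{F}_2$, this quotient is a field $\mathbb{F}_{2^{d_i}}$, where $d_i=\deg p_i$. Hence $2R_i$ is a maximal ideal. It is also nilpotent, since $(2R_i)^2=4R_i=0$. As $p_i$ is monic of degree $d_i\ge1$, $R_i$ is a free $\mathbb{Z}_4$-module with basis $1,x,\dots,x^{d_i-1}$. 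In particular $|R_i|=4^{d_i}$, and $2R_i\neq 0$ and $2R_i\neq R_i$ (otherwise the quotient field would be trivial), which gives the strict inclusions.

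\textbf{Uniqueness of the maximal ideal and units.} I will show that every $u\notin 2R_i$ is a unit. If $\pi(u)\neq0$, choose $w$ with $\pi(uw)=1$, so $uw=1+2t$ for some $t\in R_i$. Then $(1+2t)(1-2t)=1-4t^2=1$, so $uw$ is a unit and therefore so is $u$. Consequently $R_i^\times = R_i\setminus 2R_i$, and $2R_i$ is the unique maximal ideal $\mathfrak m_i$.

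\textbf{Ideal lattice.} Let $I$ be a nonzero ideal. If $I$ contains an element outside $2R_i$, that element is a unit by the previous step, so $I=R_i$. Otherwise $0\neq I\subseteq 2R_i$. To get $I=2R_i$, note that $2R_i\cong R_i/\mathrm{Ann}(2)$. From the free basis, $\mathrm{Ann}(2)=2R_i$, so $2R_i\cong R_i/2R_i$ as $R_i$-modules, which is a field and hence a simple module. Thus $I=2R_i$, and the only ideals are $0\subsetneq 2R_i\subsetneq R_i$. The one point I would check carefully is $\mathrm{Ann}(2)=2R_i$; it follows from freeness over $\mathbb{Z}_4$, because $2\sum a_jx^j=0$ forces every $a_j\in 2\mathbb{Z}_4$.

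\textbf{Representation $u=2^v w$.} If $u\notin 2R_i$, take $v=0$ and $w=u$. If $u=0$, take $v=2$ and $w=1$. If $0\neq u\in 2R_i$, write $u=2s$ with $s\notin 2R_i$; this is possible because $s\in 2R_i$ would give $u=0$. Then $s$ is a unit, so $v=1$ and $w=s$. I do not expect any serious obstacle here. The only nontrivial inputs are the monicity of the Hensel lifts $p_i$, which guarantees the free basis and the identification of the residue field, and the irreducibility of $\bar p_i$ from the factorization following Fact~\ref{fact2.1}.
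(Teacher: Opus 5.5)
Your proof is correct. The residue-field step, the unit argument via $(1+2t)(1-2t)=1$, and the case split for $u=2^v w$ match the paper's.

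The genuine difference is how you show that a nonzero ideal $I\subseteq 2R_i$ equals $2R_i$. You pass through $2R_i\cong R_i/\mathrm{Ann}(2)$, prove $\mathrm{Ann}(2)=2R_i$ from the free $\mathbb{Z}_4$-basis $1,x,\dots,x^{d_i-1}$, and conclude that $2R_i\cong R_i/2R_i$ is a simple module. The paper argues directly instead: if $0\neq 2t\in I$, then $t\notin 2R_i$ (otherwise $2t=0$), so $t$ is a unit and $2=(2t)t^{-1}\in I$.

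The paper's argument needs only the trivial inclusion $2R_i\subseteq\mathrm{Ann}(2)$ and no freeness. It is exactly the argument you already use in your last paragraph to write $u=2s$ with $s$ a unit, so your annihilator computation can be dropped. On the other hand, your route gives more: it shows $2R_i\cong\mathbb{F}_{2^{d_i}}$, hence $|2R_i|=2^{d_i}$, which the paper only states in Remark~\ref{rem:ideals}. It also makes explicit that $2R_i\neq 0$, which the paper's proof leaves implicit.

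One small wording fix: your parenthetical ``otherwise the quotient field would be trivial'' justifies only $2R_i\neq R_i$. The fact $2R_i\neq 0$ comes from $2\cdot 1\neq 0$ in the free basis, equivalently from $|R_i|=4^{d_i}\neq 2^{d_i}=|R_i/2R_i|$.
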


\begin{proof}
Write $d_i=\deg p_i$. Reducing
\[
R_i=\mathbb{Z}_4[x]/\langle p_i(x)\rangle
\]
modulo $2$ gives
\[
R_i/2R_i
\cong
\mathbb{F}_2[x]/\langle \bar p_i(x)\rangle
=:\mathbb{F}_{2^{d_i}},
\]
which is a field since $\bar p_i$ is irreducible. Thus, $2R_i$ is a maximal ideal.

Let $u\in R_i\setminus2R_i$. Then $u$ is nonzero in the field
$R_i/2R_i$, so there exists $v\in R_i$ with
\[
uv=1+2t
\]
for some $t\in R_i$. Since $R_i$ is a $\mathbb{Z}_4$-algebra,
$4=0$ in $R_i$, so
\[
(2t)^2=4t^2=0.
\]
Hence $1+2t$ is a unit with inverse $1-2t$, and therefore $u$ is a
unit. Consequently,
\[
R_i\setminus2R_i=R_i^\times.
\]

\noindent Now let $I\subseteq R_i$ be a nonzero ideal with $I\neq R_i$. Since
$I\neq R_i$, it contains no unit, so
\[
I\subseteq2R_i.
\]
Since $I\neq0$, choose $0\neq2t\in I$. Then $2t\neq0$ forces
$t\notin2R_i$; hence $t$ is a unit. Therefore
\[
2=(2t)t^{-1}\in I,
\]
so $2R_i\subseteq I$. Thus $I=2R_i$, proving the chain property. The valuation statement is immediate: $v=0$ if $u$ is a unit,
$v=1$ if $u\in2R_i\setminus\{0\}$, and $v=2$ if $u=0$.
\end{proof}

\begin{lemma}
\label{lem:ideals}
Let $R_i$ be as in Lemma 2.2, and let
$b_1,\dots,b_\ell\in R_i$. Then
\[
\langle b_1,\dots,b_\ell\rangle=
\begin{cases}
R_i,&\text{if some }b_j\text{ is a unit},\\
2R_i,&\text{if no }b_j\text{ is a unit but some }b_j\neq0,\\
0,&\text{if }b_1=\cdots=b_\ell=0.
\end{cases}
\]
\end{lemma}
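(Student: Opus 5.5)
The plan is to use the ideal lattice $0\subsetneq 2R_i\subsetneq R_i$ from Lemma 2.2. By that lemma, the ideal $I:=\langle b_1,\dots,b_\ell\rangle$ is one of exactly three ideals, and it remains to decide which one. I will treat the three cases in turn, using the valuation description $u=2^v w$ with $w\in R_i^\times$.

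\emph{First case: some $b_j$ is a unit.} Then $1=b_j b_j^{-1}\in I$, so $I=R_i$.

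\emph{Second case: no $b_j$ is a unit, but some $b_j\neq 0$.} Lemma 2.2 gives $R_i\setminus 2R_i=R_i^\times$, so every generator $b_k$ lies in $2R_i$, and therefore $I\subseteq 2R_i$. Also $I\neq 0$ because $b_j\in I$. Since $I$ is a nonzero ideal contained in $2R_i$, the chain property forces $I=2R_i$. Alternatively, one can argue directly: write $b_j=2w$ with $w$ a unit, so that $2=b_jw^{-1}\in I$.

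\emph{Third case: $b_1=\cdots=b_\ell=0$.} Here $I=0$ trivially.

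The three cases are mutually exclusive and cover every possibility, so the displayed case formula holds. There is no real obstacle in this argument. The only point needing care is the middle case, and it rests entirely on the identity $R_i\setminus 2R_i=R_i^\times$ already established in the proof of Lemma 2.2.
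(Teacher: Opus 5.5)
Your proof is correct and takes essentially the same route as the paper: both rest entirely on the chain $0\subsetneq 2R_i\subsetneq R_i$ from Lemma 2.2. The paper phrases it as the sum $\langle b_1\rangle+\cdots+\langle b_\ell\rangle$ being the largest of the principal ideals $\langle b_j\rangle$ in the chain, whereas you locate $\langle b_1,\dots,b_\ell\rangle$ directly by case analysis, which amounts to the same argument.
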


\begin{proof}
By Lemma 2.2, the ideals of $R_i$ form the chain
\[
0\subsetneq2R_i\subsetneq R_i.
\]
Hence the sum
\[
\langle b_1\rangle+\cdots+\langle b_\ell\rangle
\]
is simply the largest of the ideals $\langle b_j\rangle$ appearing
in this chain. Each $\langle b_j\rangle$ equals $R_i$, $2R_i$, or $0$
depending on whether $b_j$ is a unit, a nonzero non-unit, or zero.
\end{proof}

\begin{remark}[Ideals and their cardinalities]
\label{rem:ideals}
In our setting, the ideals of the finite chain ring $R_i$ are
\[
0\subsetneq2R_i\subsetneq R_i.
\]
Writing $d_i=\deg p_i$, we have
\[
|R_i|=4^{d_i},
\qquad
|2R_i|=2^{d_i},
\qquad
|0|=1.
\]
Indeed, $R_i$ is a free $\mathbb{Z}_4$-module of rank $d_i$, while
$2R_i$ is a vector space over $\mathbb{F}_2$ of dimension $d_i$.
Thus the three ideals have distinct cardinalities.

Since an $R_i$-module isomorphism is, in particular, a bijection,
two ideals of $R_i$ can be isomorphic only if they have the same
cardinality. Consequently, among the ideals of $R_i$, two ideals are
isomorphic as $R_i$-modules if and only if they are equal as sets.

This observation is useful in the proof of Theorem 4.1.
There we have
\[
C_i\subseteq\langle g_i\rangle
\]
for every $i$. Hence, if
\[
|C_i|=|\langle g_i\rangle|,
\]
then
\[
C_i=\langle g_i\rangle.
\]
Thus equality of the global cardinalities can be used to force equality
componentwise.
\end{remark}

\begin{remark}
We use both $f(x)$ and $f$ to denote a polynomial.
\end{remark}

\section{Sufficient Conditions for Type Preservation}
\label{sec:sufficient}

\noindent
Our first result shows that it suffices for a single $f_j$ to be
coprime to $x^m-1$ over $\mathbb{F}_2$ for type preservation.

\begin{theorem}
\label{thm:single}
If there exists
$f_{j_0}(x)\in\{f_1(x),\dots,f_\ell(x)\}$ such that
\[
\gcd\big(\bar f_{j_0}(x),x^m-1\big)=1
\]
in $\mathbb{F}_2[x]$, then $C$ has the same type
$4^{k_1}2^{k_2}$ as $C_g$.
\end{theorem}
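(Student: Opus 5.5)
We will show that $C\cong C_g$ as $\mathbb{Z}_4$-modules, which gives equality of types. The natural map is the $R$-module homomorphism
\[
\varphi: C_g\to C,\qquad a(x)g(x)\mapsto \big(a f_1 g,\dots,a f_\ell g\big),
\]
where $a\in R$. We first check that $\varphi$ is well defined: if $ag=a'g$ in $R$, then $(a-a')f_jg=f_j(a-a')g=0$ for every $j$. The map is clearly $R$-linear, hence $\mathbb{Z}_4$-linear. It is surjective by the definition of $C$ as the $R$-submodule generated by $(f_1g,\dots,f_\ell g)$. So the whole proof reduces to showing that $\varphi$ is injective.

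The key step is to show that $f_{j_0}$ is a unit in $R$. There are two natural routes. The first is direct: since $\gcd(\bar f_{j_0},x^m-1)=1$ in $\mathbb{F}_2[x]$, there exist $u,v\in\mathbb{Z}_4[x]$ with $uf_{j_0}+v(x^m-1)=1+2t$ for some $t\in\mathbb{Z}_4[x]$. Multiplying by $1-2t$ and using $4=0$ shows that $f_{j_0}$ is invertible modulo $x^m-1$. The second route uses the CRT decomposition $R\cong\prod R_i$. Coprimality means $\bar p_i\nmid\bar f_{j_0}$ for each $i$, so the image of $f_{j_0}$ in $R_i$ lies outside $2R_i$. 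By Lemma 2.2 this image is a unit in each $R_i$, and hence $f_{j_0}$ is a unit in $R$. I would present the first argument, since it is self-contained, and mention that it parallels the proof of Lemma 2.2.

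Injectivity then follows. If $\varphi(ag)=0$, then in particular $af_{j_0}g=0$. Multiplying by $f_{j_0}^{-1}$ in the commutative ring $R$ gives $ag=0$. Thus $\varphi$ is a $\mathbb{Z}_4$-module isomorphism $C_g\cong C$. By uniqueness in the structure theorem, $C\cong\mathbb{Z}_4^{k_1}\oplus\mathbb{Z}_2^{k_2}$, so $C$ has type $4^{k_1}2^{k_2}$.

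The only real point is the unit argument: lifting coprimality from $\mathbb{F}_2[x]$ to invertibility in $R$ over $\mathbb{Z}_4$. This step is routine, because $2$ is nilpotent. The rest is bookkeeping. The one place that needs care is the well-definedness of $\varphi$, which must be checked on $R$-multiples of $g$ rather than on polynomials themselves.
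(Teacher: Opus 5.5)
Your proposal is correct and follows essentially the same route as the paper: the paper's proof uses the map $\varphi: ag \mapsto (af_1g,\dots,af_\ell g)$, checks well-definedness and surjectivity, and derives injectivity from $f_{j_0}$ being a unit in $R$. The only difference is that the paper simply asserts that $f_{j_0}$ is a unit in $R$. Your lifting argument, multiplying $uf_{j_0}+v(x^m-1)=1+2t$ by $1-2t$, supplies the justification the paper leaves implicit.
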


\begin{proof}
Define
\[
\varphi:C_g\longrightarrow C,
\qquad
\varphi(a(x)g(x))=
(a(x)f_1(x)g(x),\dots,a(x)f_\ell(x) g(x)),
\quad a(x)\in R.
\]
The map $\varphi$ is well defined: if $a(x)g(x)=b(x)g(x)$, then
$(a(x)-b(x))g(x)=0$, and hence
\[
(a(x)-b(x))f_i(x) g=f_i(a-b)g=0
\]
for every $i$. It is clearly an $R$-module homomorphism and is
surjective.

It remains to show that $\varphi$ is injective. Suppose
$\varphi(a(x)g(x))=0$. Then
\[
a(x)f_i(x) g(x)=0
\]
for every $i$, and in particular
\[
a(x)f_{j_0}(x)g(x)=0.
\]
Since
\[
\gcd(\bar f_{j_0}(x),x^m-1)=1 \text{ in }\mathbb{F}_2[x],
\]
$f_{j_0}(x)$ is a unit in $R$. Therefore
\[
a(x)g(x)=0.
\]
Thus $\ker\varphi=0$, and $\varphi$ is an isomorphism. Hence
$C\cong C_g$ and $C$ has the same type as $C_g$.
\end{proof}

\noindent Now we give another sufficient condition which is a generalization of the previous theorem.

\begin{theorem}
\label{thm:joint}
Let $C_g=\langle g(x)\rangle$ be a cyclic code of odd length $m$ over
$\mathbb{Z}_4$ of type $4^{k_1}2^{k_2}$, and let
\[
C=\langle(f_1(x)g(x),\dots,f_\ell(x)g(x))\rangle.
\]
If
\[
\gcd\big(
\bar f_1(x),\dots,\bar f_\ell(x),x^m-1
\big)=1
\]
in $\mathbb{F}_2[x]$, then $C$ has the same type
$4^{k_1}2^{k_2}$ as $C_g$.
\end{theorem}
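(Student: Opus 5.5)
Following the proof of Theorem 3.1, we consider the surjective $R$-module homomorphism
\[
\varphi:C_g\to C,\qquad \varphi(a g)=(af_1g,\dots,af_\ell g).
\]
It is well defined for exactly the reason given there. The whole task is to show that $\ker\varphi=0$. Once that is done, $C\cong C_g$ as $R$-modules, hence as $\mathbb{Z}_4$-modules, and the types agree. So suppose $a\in R$ satisfies $af_ig=0$ for all $i$; we must deduce $ag=0$.

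The key step is to produce a B\'ezout-type identity in $R$ from the hypothesis over $\mathbb{F}_2$. Since $\gcd(\bar f_1,\dots,\bar f_\ell,x^m-1)=1$ in $\mathbb{F}_2[x]$, there are $\bar u_1,\dots,\bar u_\ell\in\mathbb{F}_2[x]$ with $\sum_i\bar u_i\bar f_i\equiv1\pmod{x^m-1}$. Lift the $\bar u_i$ arbitrarily to $u_i\in\mathbb{Z}_4[x]$ and set $s=\sum_i u_if_i\in R$. Then $s\equiv1\pmod 2$, so $s=1+2t$ for some $t\in R$. As in the proof of Lemma 2.2, $(2t)^2=0$, so $s$ is a unit of $R$ with inverse $1-2t$.

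Now
\[
s\,ag=\sum_i u_i(af_ig)=0,
\]
and multiplying by $s^{-1}$ gives $ag=0$. Hence $\ker\varphi=0$ and $\varphi$ is an isomorphism.

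The only point needing care is the lifting step: that the ideal $\langle f_1,\dots,f_\ell\rangle$ of $R$ is all of $R$ when the reductions are jointly coprime to $x^m-1$. The unit argument above handles this directly, because $1+\text{(nilpotent)}$ is a unit. As an alternative check, one could use the CRT decomposition $R\cong\prod_i R_i$. Joint coprimality means that for each irreducible factor $\bar p_i$ some $\bar f_j\not\equiv0\pmod{\bar p_i}$, so the image of that $f_j$ in $R_i$ is a unit by Lemma 2.2. Lemma 2.3 then gives $\langle f_1,\dots,f_\ell\rangle=R_i$ in each component, hence in $R$. This route uses neither Theorem 3.1 nor the assumption that any single $f_j$ is coprime to $x^m-1$.
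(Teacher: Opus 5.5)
Your proof is correct and is essentially the paper's argument: you lift the $\mathbb{F}_2$ B\'ezout identity, note that the lifted combination equals $1+2t$ and is therefore a unit because $(2t)^2=0$, and use it to show $\ker\varphi=0$. The only difference is cosmetic: the paper absorbs the inverse into the coefficients, setting $a_i=(1+2h)B_i$ to get an exact identity $\sum_i a_i f_i=1$ in $R$, whereas you multiply by $s^{-1}$ after applying the combination to the elements $af_ig$.
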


\begin{proof}
As before let
\[
R=\mathbb{Z}_4[x]/\langle x^m-1\rangle.
\]
By hypothesis and the B\'ezout identity over $\mathbb{F}_2$, there exist
$b_1(x),\dots,b_\ell(x)\in\mathbb{F}_2[x]$ such that
\[
b_1(x)\bar f_1(x)+\cdots+b_\ell(x)\bar f_\ell(x)
\equiv1\pmod{x^m-1}.
\]

\noindent Lift each $b_i(x)$ to $B_i(x)\in\mathbb{Z}_4[x]$ and set
\[
E(x)=1-\sum_{i=1}^{\ell}B_i(x)f_i(x)\in R.
\]
Then
\[
E(x)=2h(x)
\]
for some $h(x)\in R$, so
\[
\sum_{i=1}^{\ell}B_i(x)f_i(x)=1-2h(x).
\]
Put
\[
a_i(x)=(1+2h(x))B_i(x).
\]
Then
\[
\sum_{i=1}^{\ell}a_i(x)f_i(x)
=
(1+2h(x))(1-2h(x))
=
1-4h(x)^2
=
1.
\]

\noindent Thus
\[
a_1(x)f_1(x)+\cdots+a_\ell(x)f_\ell(x)=1 \text{ in }R.
\]

\noindent With this identity, define
\[
\varphi:C_g\to C,
\qquad
\varphi(ag)=(af_1g,\dots,af_\ell g).
\]
If $\varphi(ag)=0$, then
\[
0
=
\sum_{i=1}^{\ell}a_i(x)(af_i g)
=
a\left(\sum_{i=1}^{\ell}a_i(x)f_i(x)\right)g
=
ag.
\]
Hence $\ker\varphi=0$, so
\[
C\cong C_g.
\]
Therefore $C$ has the same type as $C_g$.
\end{proof}

\section{A Necessary and Sufficient Condition}
\label{sec:necsuff}

Both prior results share a common shortcoming: they require $f_1,\dots,f_\ell$
to be jointly coprime to the entire polynomial $x^m-1$, when only the
irreducible factors $p_i(x)$ for which the image of $g(x)$ in the
corresponding component is nonzero can affect the type of $C$.
Localizing at each factor $p_i(x)$ makes this precise.

Recall the isomorphism
\[
\Phi:R\longrightarrow\prod_{i=1}^rR_i,
\qquad
\Phi(h)=(h_1,h_2,\dots,h_r).
\]
Write $g_i$ for the image of $g(x)$ in $R_i$, and set
\[
S=\{i:g_i\neq0\}.
\]

By Lemma 2.2, every nonzero $g_i$ is either a unit of
$R_i$ or a nonzero element of $2R_i$. Splitting $S$ accordingly, we define:
\[
S_1=
\{i\in S:g_i\text{ is a unit of }R_i\},
\]
\[
S_2=
\{i\in S:g_i\in2R_i\setminus\{0\}\}.
\]
Thus
\[
S=S_1\sqcup S_2.
\]

\noindent Applying $\Phi$ to $C_g=\langle g\rangle$ and using
Lemma 2.3, we obtain
\[
\Phi(C_g)=\prod_i\langle g_i\rangle.
\]
Counting elements gives
\[
k_1=\sum_{i\in S_1}d_i,
\qquad
k_2=\sum_{i\in S_2}d_i,
\qquad
d_i=\deg p_i(x).
\]
In particular, $C_g$ is free over $\mathbb{Z}_4$ exactly when
$S_2=\varnothing$. Now, we state the main theorem of this paper.

\begin{theorem}
\label{thm:main}
With notation as above,
\[
\gcd\left(
\bar f_1(x),\dots,\bar f_\ell(x),
\prod_{i\in S}\bar p_i(x)
\right)=1
\]
in $\mathbb{F}_2[x]$ if and only if $C$ has the same type
$4^{k_1}2^{k_2}$ as $C_g$.
\end{theorem}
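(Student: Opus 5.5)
The plan is to work componentwise through the Chinese remainder isomorphism $\Phi$ of Section 4, extended coordinatewise to $R^\ell\cong\prod_{i=1}^r R_i^\ell$. Under this map, $C$ corresponds to $\prod_i C_i$, where $C_i\subseteq R_i^\ell$ is the cyclic $R_i$-submodule generated by
\[
v_i=(f_{1,i}g_i,\dots,f_{\ell,i}g_i),
\]
and $f_{j,i}$ denotes the image of $f_j$ in $R_i$. Since $\Phi$ is $\mathbb{Z}_4$-linear, $C\cong\bigoplus_i C_i$ as $\mathbb{Z}_4$-modules. Likewise $C_g\cong\bigoplus_i\langle g_i\rangle$. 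So the type of each side is determined by the isomorphism types of the components.

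The first step is to identify $C_i$ up to isomorphism. We have $C_i=R_iv_i\cong R_i/\operatorname{Ann}(v_i)$. Moreover $\operatorname{Ann}(v_i)=\operatorname{Ann}(J_i)$, where
\[
J_i=\langle f_{1,i}g_i,\dots,f_{\ell,i}g_i\rangle=g_i\langle f_{1,i},\dots,f_{\ell,i}\rangle\subseteq\langle g_i\rangle .
\]
By Lemma 2.2 the only ideals are $0\subsetneq 2R_i\subsetneq R_i$, and their annihilators are $R_i$, $2R_i$ and $0$ respectively. Checking these three cases gives $R_i/\operatorname{Ann}(J_i)\cong J_i$. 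Hence $C_i\cong J_i$, so $C_i$ is either free of rank $d_i$, or $\mathbb{Z}_2^{d_i}$, or $0$, according as $J_i=R_i$, $2R_i$, or $0$. In particular $|C_i|=|J_i|\le|\langle g_i\rangle|$ for every $i$.

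The second step translates the gcd condition. The condition $\gcd(\bar f_1,\dots,\bar f_\ell,\prod_{i\in S}\bar p_i)=1$ holds iff, for each $i\in S$, some $\bar f_j$ is not divisible by $\bar p_i$. That is, some $f_{j,i}$ is nonzero in the field $R_i/2R_i$, which by Lemma 2.2 means some $f_{j,i}$ is a unit. By Lemma 2.3 this says $\langle f_{1,i},\dots,f_{\ell,i}\rangle=R_i$.

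For sufficiency, in this case $J_i=\langle g_i\rangle$ for all $i\in S$. For $i\notin S$ we have $J_i=0=\langle g_i\rangle$ trivially. So $C_i\cong\langle g_i\rangle$ for all $i$, and hence $C\cong C_g$.

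For necessity, suppose some $i_0\in S$ has no unit among the $f_{j,i_0}$. Then $\langle f_{1,i_0},\dots,f_{\ell,i_0}\rangle\subseteq 2R_{i_0}$, so $J_{i_0}\subseteq 2g_{i_0}R_{i_0}$. If $g_{i_0}$ is a unit, this gives $J_{i_0}\subseteq 2R_{i_0}\subsetneq\langle g_{i_0}\rangle$. If $g_{i_0}\in 2R_{i_0}\setminus\{0\}$, it gives $J_{i_0}=0\subsetneq\langle g_{i_0}\rangle$. Either way $|C_{i_0}|<|\langle g_{i_0}\rangle|$. Combined with $|C_i|\le|\langle g_i\rangle|$ for all other $i$, this yields $|C|<|C_g|$. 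Since the type determines the cardinality $4^{k_1}2^{k_2}$, $C$ cannot have the same type as $C_g$.

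The main point needing care is the first step: the isomorphism $R_iv_i\cong J_i$ for a cyclic submodule of $R_i^\ell$. I would verify it directly from the three-element ideal lattice as above, rather than through any general duality. I expect the remaining steps to be routine bookkeeping using Lemmas 2.2, 2.3 and Remark 2.4.
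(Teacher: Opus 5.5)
Your proposal is correct and follows the paper's route: the CRT decomposition into components, the unit criterion from Lemmas 2.2--2.3 with the same unit/non-unit case split on $g_i$, the gcd translation via the distinct irreducibles $\bar p_i$, and a cardinality count for necessity. Your first step is more careful than the paper's proof. The paper writes $\Phi(C)=\prod_i C_i$ with $C_i$ the ideal $\langle f_{1,i}g_i,\dots,f_{\ell,i}g_i\rangle\subseteq R_i$ and concludes ``$C=C_g$'', even though the component of $C$ is really the cyclic submodule $R_iv_i\subseteq R_i^\ell$. Your identification $R_iv_i\cong R_i/\operatorname{Ann}(J_i)\cong J_i$ is what justifies that shortcut, and it turns the paper's set equality into the correct statement $C\cong C_g$.
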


\begin{proof}
Applying $\Phi$ to the generating tuple of $C$ gives
\[
\Phi(C)=\prod_i C_i,
\]
where
\[
C_i=
\langle f_{1,i}g_i,\dots,f_{\ell,i}g_i\rangle
\subseteq R_i.
\]
For $i\notin S$, we have
\[
C_i=0=\langle g_i\rangle.
\]

Fix $i\in S$. If $i\in S_1$, multiplication by the unit $g_i$ is a
bijection of $R_i$. Hence
\[
C_i
=
g_i\langle f_{1,i},\dots,f_{\ell,i}\rangle.
\]
By Lemma 2.3,
\[
C_i=\langle g_i\rangle
\]
precisely when some $f_{j,i}$ is a unit.

\noindent If $i\in S_2$, then $g_i$ generates $2R_i$. If some $f_{j,i}$ is a
unit, then $f_{j,i}g_i$ generates $2R_i$. If every $f_{j,i}$ lies in $2R_i$, then
\[
f_{j,i}g_i\in4R_i=0
\]
for every $j$, so $C_i=0$. Thus, in both cases,
\[
C_i=\langle g_i\rangle
\iff
\text{some }f_{j,i}\text{ is a unit of }R_i.
\]
By the unit criterion,
\[
\text{some }f_{j,i}\text{ is a unit}
\iff
\bar p_i(x)\nmid\bar f_j(x)
\quad\text{for some }j.
\]

\noindent Suppose now that $C$ has the same type as $C_g$. Then
\[
|C|=|C_g|.
\]
If the above condition failed for some $i_0\in S$, then
\[
|C_{i_0}|
\le
\frac{|\langle g_{i_0}\rangle|}{2^{d_{i_0}}},
\]
while
\[
C_i\subseteq\langle g_i\rangle
\]
for every other $i$. Therefore
\[
|C|
\le
\frac{|C_g|}{2^{d_{i_0}}}
<
|C_g|,
\]
a contradiction. Hence
\[
C_i=\langle g_i\rangle
\]
for every $i\in S$.

Conversely, if the condition holds for every $i\in S$, then
\[
C_i=\langle g_i\rangle
\]
for every $i$, and therefore
\[
\Phi(C)=\Phi(C_g).
\]
Since $\Phi$ is an isomorphism,
\[
C=C_g.
\]
In particular, $C$ and $C_g$ have the same type.

Finally, since the $\bar p_i$'s are distinct irreducible factors, the condition that for all $i \in S$ there exists $j$ such that $\bar p_i \nmid \bar f_j$ is equivalent to
\[
\gcd\left(\bar f_1, \dots, \bar f_\ell, \prod_{i \in S} \bar p_i\right) = 1.
\]
This proves the theorem.
\end{proof}

This settles the open problem posed in
\cite{AydinLuOnta2023}: we obtain a necessary and sufficient condition
for the $1$-generator quasi-cyclic code $C$ to have the same type as
the underlying cyclic code $C_g$.

\begin{corollary}[Freeness criterion]
\label{cor:free}
Suppose $f_1(x),\dots,f_\ell(x)$ satisfy the condition of
Theorem 4.1, so that $C\cong C_g$. Then $C$ is a free
$\mathbb{Z}_4$-module if and only if
\[
\gcd\left(
\bar g(x),
\prod_{i\in S}\bar p_i(x)
\right)=1
\]
in $\mathbb{F}_2[x]$, equivalently if and only if
$S_2=\varnothing$.
\end{corollary}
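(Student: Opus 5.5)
Under the hypothesis of Theorem 4.1 the proof gives $\Phi(C)=\Phi(C_g)$, so $C\cong C_g$ and $C$ has type $4^{k_1}2^{k_2}$. By the counting before Theorem 4.1, $k_2=\sum_{i\in S_2}d_i$. So I would first show that $C$ is free over $\mathbb{Z}_4$ if and only if $k_2=0$, and then that $k_2=0$ is equivalent to $S_2=\varnothing$ and to the gcd condition on $\bar g$.

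\emph{Step 1: freeness and $k_2$.} A module $\mathbb{Z}_4^{k_1}\oplus\mathbb{Z}_2^{k_2}$ is free over $\mathbb{Z}_4$ exactly when $k_2=0$. Indeed, $\mathbb{Z}_2$ summands are killed by $2$, while a free module has no nonzero element annihilated by $2$ outside $2M$. Alternatively, since the type is unique, $\mathbb{Z}_4^{k_1}\oplus\mathbb{Z}_2^{k_2}$ is isomorphic to some $\mathbb{Z}_4^{n}$ only if $k_2=0$. Since each $d_i\ge 1$, $k_2=0$ holds if and only if $S_2=\varnothing$.

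\emph{Step 2: $S_2$ and $\bar g$.} I would identify $S_2$ in terms of $\bar g$ via Lemma 2.2. For $i\in S$, $g_i$ is a unit exactly when its image in $R_i/2R_i\cong\mathbb{F}_2[x]/\langle\bar p_i\rangle$ is nonzero, that is, when $\bar p_i\nmid\bar g$. Hence
\[
S_2=\{i\in S:\bar p_i\mid\bar g\}.
\]
The $\bar p_i$ are distinct irreducibles by Fact 2.1, so $\gcd\bigl(\bar g,\prod_{i\in S}\bar p_i\bigr)=1$ holds if and only if no $\bar p_i$ with $i\in S$ divides $\bar g$, which is exactly $S_2=\varnothing$.

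\emph{Main obstacle.} The one point needing care is compatibility of reductions: the image of $g$ in $R_i/2R_i$ must be $\bar g \bmod \bar p_i$. This holds because both maps factor through $\mathbb{Z}_4[x]\to\mathbb{F}_2[x]\to\mathbb{F}_2[x]/\langle\bar p_i\rangle$. The rest is bookkeeping.
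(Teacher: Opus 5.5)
Your proposal is correct and follows essentially the same route as the paper. Both arguments use $C\cong C_g$ to reduce freeness to $k_2=\sum_{i\in S_2}d_i=0$, i.e.\ $S_2=\varnothing$, and then identify $S_2=\{i\in S:\bar p_i\mid\bar g\}$ via $R_i\setminus 2R_i=R_i^\times$ (Lemma 2.2); you also fill in details the paper leaves implicit, namely why $\mathbb{Z}_4^{k_1}\oplus\mathbb{Z}_2^{k_2}$ is free only when $k_2=0$ and why the image of $g$ in $R_i/2R_i$ is $\bar g \bmod \bar p_i$.
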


\begin{proof}
Since $C\cong C_g$, the code $C$ is free exactly when $C_g$ is free,
that is, when
\[
k_2=\sum_{i\in S_2}d_i=0.
\]
Hence
\[
S_2=\varnothing.
\]
For $i\in S$, the condition $i\in S_2$ is equivalent to
\[
g_i\in2R_i\setminus\{0\},
\]
which is in turn is equivalent to
\[
\bar p_i(x)\mid\bar g(x).
\]
The desired gcd criterion follows.
\end{proof}

\section{Examples and Non-Examples}
\label{sec:examples}

The three theorems above form a chain of increasingly permissive
conditions. The smallest odd length for which the distinctions between these conditions arise is $m=3$.

Take $m=3$, so
\[
x^3-1=(x-1)(x^2+x+1) \text{ over } \mathbb{Z}_4.
\]
Let $p_1(x)=x-1,   p_2(x)=x^2+x+1$. The B\'ezout identity
\[
(x+2)(x-1)+3(x^2+x+1)=1
\]
holds in $\mathbb{Z}_4[x]$.

We work with two seed codes:
\[
C_g=\langle g(x)\rangle,
\qquad
g(x)=x^2+x+3,
\]
and
\[
C_{g'}=\langle g'(x)\rangle,
\qquad
g'(x)=2x^2+2x.
\]

A direct check shows that $C_g$ has type $4^1 2^2$; both
$p_1$- and $p_2$-factors are \emph{live}: reducing $g$ modulo
$p_1$ and $p_2$ gives $g_1=1$, a unit of $R_1$, and
$g_2=2$, a nonzero non-unit of $R_2$, so both components are
nonzero. Meanwhile, $C_{g'}$ has type $4^0 2^2$ with only the
$p_2$-factor live, since $g'_1=0$ but $g'_2=2\neq0$ (so
$S=\{2\}$).

\begin{example}[Theorem 3.1, positive example]

Take $f_1(x)=x$. Since $x^3\equiv1$ in $R$, $x$ is a unit, and the
hypothesis of Theorem~3.1 holds. Then
\[
C=\langle xg(x)\rangle
\]
still has type $4^1 2^2$.
\end{example}

\begin{example}[Theorem 3.1 fails, conclusion still holds]
Take
\[
f_1(x)=x-1,
\qquad
f_2(x)=x^2+x+1.
\]
Neither $f_1$ nor $f_2$ is coprime to $x^3-1$, so the hypothesis of
Theorem 3.1 fails. Yet the type is preserved:
\[
C=\langle(x-1)g,(x^2+x+1)g\rangle
\]
still has type $4^1 2^2$.
\end{example}

\begin{example}[Theorem 3.2, positive example]
Continuing with the previous example,
\[
\gcd(\bar f_1,\bar f_2,x^3-1)=1
\]
in $\mathbb{F}_2[x]$. Thus, Theorem 3.2 applies and correctly predicts
the type $4^1 2^2$.
\end{example}

\begin{example}[Theorem 3.2 fails, the conclusion still holds]
Take $\ell = 1$ and $f_1(x)=x-1$.
Apply it to $C_{g'}$. Theorem~3.2 fails because
\[
\gcd(\bar f_1,x^3-1)=x+1\neq1.
\]
Yet
\[
C=\langle(x-1)g'\rangle
\]
still has type $4^0 2^2$, the same as $C_{g'}$.
\end{example}

\begin{example}[Theorem 4.1, condition fails]
Take
\[
f_1(x)=x^2+x+1=p_2(x)
\]
and apply it to $C_g$. Then the condition of Theorem 4.1 fails.
Indeed,
\[
C=\langle p_2(x)g(x)\rangle
\]
has type $4^1 2^0$, whereas $C_g$ has type $4^1 2^2$.
\end{example}

\begin{example}[Theorem 4.1, second failure]
Take again
\[
f_1(x)=x^2+x+1=p_2(x),
\]
now applied to $C_{g'}$. Since $S=\{2\}$, the condition of
Theorem 4.1 fails. Indeed,
\[
f_1(x)g'(x)\equiv0\pmod{x^3-1},
\]
so
\[
C=\{0\},
\]
of type $4^0 2^0$, whereas $C_{g'}$ has type $4^0 2^2$.
\end{example}

\section{Record Breaking Codes}
\label{sec:computational}

The structural results developed above, in particular Theorem~4.1,
can be used to guide a systematic computer search for quasi-cyclic
codes over
$\mathbb{Z}_4$.
Applying this criterion to suitable families of defining
polynomials, we found 36 quaternary linear codes with new parameters that, to the best of our knowledge, do not appear in the existing literature. We present these codes and their parameters in this section. Every code below was found by a targeted search over odd values of $m$, using QC codes of the form given in Theorem~4.1. For each cyclic seed $C_g$ we computed the live set $S=\{i:g_i\neq0\}$ from the CRT factorization of $x^m-1$, then searched over $f_1(x),\dots,f_\ell(x)$ satisfying
\[
\gcd\left(\bar f_1,\dots,\bar f_\ell,\prod_{i\in S}\bar p_i\right)=1 \text{ in } \mathbb{F}_2[x],
\]
so that Theorem~4.1 guarantees that $C\cong C_g$ and hence that $C$ has the same type as the seed.

The parameters and the minimum Lee distances $d_L$ of the codes below have been computed using Magma software~\cite{Magma1997}. In Table~\ref{tab:section6-params} below, we present a subset  of the 36 new codes we have found. The table gives, for each code, the parameters $[n,k_1,k_2,d_L]$, the index $\ell$, whether the binary Gray-map image is linear or nonlinear, the best known minimum distance for the corresponding length and type, together with the improvement our code achieves over it (i.e., $d_L^{\text{best known}}(+\Delta)$ where $\Delta = d_L - d_L^{\text{best known}}$). For comparison, we used the parameters of the best known codes in $\mathbb{Z}_4$ given in the database (see \url{http://quantumcodes.info/Z4}).

\renewcommand{\arraystretch}{1.25}
\footnotesize
\setlength{\tabcolsep}{6pt}

\begin{longtable}{@{}c c c c c l@{}}
\caption{Parameters of the new Quaternary codes found via Theorem 4.1, all at odd $m$.}
\label{tab:section6-params}\\
\toprule
\textbf{ID} & \textbf{$[n,k_1,k_2,d_L]$} & \textbf{$\ell$} & \textbf{Gray image} & \textbf{Best known $d_L$} & \textbf{Improvement} \\
\midrule
\endfirsthead

\multicolumn{6}{c}{\tablename\ \thetable\ -- continued from previous page}\\
\toprule
\textbf{ID} & \textbf{$[n,k_1,k_2,d_L]$} & \textbf{$\ell$} & \textbf{Gray image} & \textbf{Best known $d_L$} & \textbf{Improvement} \\
\midrule
\endhead

\midrule
\multicolumn{6}{r}{continued on next page}\\
\endfoot

\bottomrule
\endlastfoot

C1 & $[18,1,8,12]$   & $2$ & linear    & $8$  & $+4$  \\
C2 & $[21,6,1,14]$   & $3$ & nonlinear & $12$ & $+2$  \\
C3 & $[18,8,1,6]$    & $2$ & nonlinear & $4$  & $+2$  \\
C4 & $[44,1,10,30]$  & $4$ & linear    & $8$  & $+22$ \\
C5 & $[77,1,10,60]$  & $7$ & linear    & $14$ & $+46$ \\
C6 & $[26,1,12,12]$  & $2$ & linear    & $8$  & $+4$  \\
C7 & $[72,2,6,60]$   & $8$ & nonlinear & $32$ & $+28$ \\
C8 & $[66,1,10,48]$  & $6$ & linear    & $44$ & $+4$  \\
C9 & $[88,1,10,70]$  & $8$ & linear    & $16$ & $+54$ \\

\end{longtable}
\normalsize
For the seed polynomial $g(x)$ and the polynomials $f_1(x),\ldots,f_\ell(x)$ used to construct all 36 codes, see this \href{https://github.com/glitch1729/Better-Quaternary-Codes/raw/main/Quaternary_codes.pdf}{supplementary file}. These codes have
also been submitted to the database of $\mathbb{Z}_4$ codes
at \url{http://quantumcodes.info/Z4}

\section{Open Directions}
\label{sec:open}
This paper settles the open problem stated in
\cite{AydinLuOnta2023} concerning conditions under which a
$1$-generator quasi-cyclic code over $\mathbb{Z}_4$ has the same type
as its underlying cyclic code. The methods in this paper rely on \(m\) being odd, which ensures that \(x^m-1\) is squarefree over \(\mathbb F_2\) (Fact 2.1) and hence allows \(R\) to decompose via the CRT into a product of finite chain rings. For even $m$,
$x^m-1$ is not squarefree over $\mathbb{F}_2$, so the local components
of $R$ are no longer necessarily finite chain rings, and the argument
behind Theorem 4.1 does not directly apply. Extending the criterion to
even length is a natural next step.

\end{document}